\DeclareMathOperator*{\argmax}{arg\,max}
\newtheorem{definition}{Definition}
\newtheorem{assumption}{Assumption}
\newtheorem{lemma}{Lemma}
\DeclareMathOperator{\diag}{diag}
\DeclareMathOperator{\rank}{rank}
\renewcommand{\diag}{\mathop{\mathrm{diag}}}
\newcommand{\arxiv}{}
\newcommand{\colored}[1]{\ifdefined\arxiv #1\else {\color{blue}#1}\fi}
\title{\LARGE \bf
Tube-Based Zonotopic Data-Driven Predictive Control 

}
\author{Alessio Russo$^{\star,1}$ and Alexandre Proutiere$^{1}$% <-this % stops a space
\thanks{$^{\star}$ Corresponding author.}% <-this % stops a space
\thanks{$^{1}$Alessio Russo and Alexandre Proutiere are in the Division of Decision and Control Systems of the EECS School at KTH Royal Institute of Technology, Stockholm, Sweden.
        {\tt\small \{alessior,alepro\}@kth.se}}% <-this % stops a space
}
\newcommand{\zonotope}[1]{\mathcal{Z}_{#1}}
\newcommand{\matrixzonotope}[1]{\mathcal{M}_{#1}}
\newcommand{\constrainedmatrixzonotope}[1]{\mathcal{N}_{#1}}
\newcommand{\data}{\mathcal{D}}
\begin{document}

\maketitle
\thispagestyle{empty}
\pagestyle{empty}

\begin{abstract}
We present a novel tube-based data-driven predictive control method for linear systems affected by a bounded addictive disturbance. Our method leverages recent results in the reachability analysis of unknown linear systems to formulate and solve a robust tube-based predictive control problem. More precisely, our approach consists in deriving, from the collected data, a zonotope that includes the true state error set. \colored{We show how to guarantee the stability of the resulting error zonotope, which can be exploited to increase the computational efficiency of existing zonotopic data-driven MPC formulations}. Results on a double-integrator affected by strong adversarial noise demonstrate the effectiveness of the proposed control approach.
\end{abstract}

\section{Introduction}
A recent trend in the control community is that of controlling unknown dynamical systems solely from their input-output data. This trend, which has sparked from a growing interest in machine learning and reinforcement learning methods,  is more commonly known as direct control, or model-free control, and has already been investigated in different ways, see for example \ifdefined \arxiv Direct Adaptive Control \cite{ioannou1986robust},  data-driven control methods such as  VRFT \cite{campi2002virtual}, and other references in \cite{markovsky2008data}. \else \cite{ioannou1986robust,campi2002virtual,markovsky2008data}.\fi

In addition to classical direct control methods, recent research has made use of some results in \cite{willems2005note}, that allow to characterize an unknown linear system through a finite collection of its input/output data (which is also known as Willem's et al. lemma, or \textit{fundamental} lemma, \cite{willems2005note,markovsky2008data}).
Thanks to this representation, it is possible to directly analyze the system, or formulate a control law, using only the collected input-output data, without the need of identifying the underlying unknown system. For example, one may formulate data-driven versions of the linear quadratic regulator \cite{de2019formulas}, or of model-predictive control (MPC) \cite{coulson2019data}.

Data-driven MPC formulations based on the fundamental lemma have several advantages compared to traditional MPC: they may require less data compared to classical learning-based approaches, and there is no need to identify the underlying system, which could be a costly process for complex systems. However, data-driven MPC formulations based on the fundamental lemma tend to be brittle, and several studies in the literature  thoroughly investigate the robustness  of data-driven MPC procedures by analysing the resulting multiplicative model uncertainty in the Hankel matrices of the system. \cite{coulson2019regularized,huang2021robust,breschi2022role,berberich2020data}. 

In contrast, in this work, we address the problem of robustness for data-driven predictive control by exploiting the data-driven zonotopic reachability analysis proposed in \cite{alanwar2021data}. Reachability analysis computes the set of trajectories that a system can reach in a finite amount of time, and it is used for formal verification and set-based estimation due to its robust control guarantees \cite{althoff2010reachability,althoff2021set}. Following the analysis in \cite{alanwar2021data}, our approach is based on providing robustness with respect to all possible system matrices that are consistent with the data collected by the user. Our method is inspired by \cite{alanwar2022robust}. There, the authors provide a robust data-driven predictive control procedure based on reachability analysis. This procedure however tends to be computationally intensive as well as sensitive to potential instabilities.

We propose a tube-based approach to robust data-driven predictive control. We first discuss how min-max robust control \cite{alanwar2022robust} tends to be computationally infeasible for uncertainties represented by zonotopes, and then propose our method. The underlying idea is to treat the unknown quantities as external disturbances of the system, while guaranteeing the stability of the reachable error trajectories. We show that for a stabilizing gain $K$ it is possible to bound the resulting error zonotope of the system. Identifying a stabilizing gain $K$ is NP-hard. We present simple methods, with probabilistic guarantees, to identify $K$ and to verify that it is stabilizing. We conclude by showing our method on a double-integrator affected by strong-noise, demonstrating how our procedure can guarantee robustness and constraint satisfaction.
\section{Related work}
The problem of robustness for MPC has been extensively studied in the literature, see also the following standard references \cite{lofberg2003minimax,rawlings2017model}. Standard min-max formulations usually assume uncertainty in the process noise, and are usually solved by means of semi-definite relaxations \cite{lofberg2003minimax}. 

As an alternative, tube-based approaches \cite{langson2004robust,mayne2005robust} tend to be more computationally modest, while being robust to all disturbance sequences. The goal of tube-based MPC is to ensure that the nominal trajectory of the system satisfies tightened constraints, so that all possible trajectories of the true system satisfy the original constraint. 

\colored{In this work we do not make any statistical assumption on the noise, and assume that the system matrices $(A,B)$ are uncertain}. A similar problem has been studied in \colored{\cite{de2004robust,calafiore2012robust,lu2019robust,bujarbaruah2021simple}}. In \cite{de2004robust} the authors consider the classical quadratic criterion on the state-action pair, and assume a \colored{noise-less} linear system  with polytopic uncertainty in the system matrices $(A,B)$. Similarly, \cite{calafiore2012robust} considers Scenario Optimization \cite{campi2008exact} to handle generic bounded uncertainty sets. In \cite{lu2019robust} \colored{the authors consider a similar problem, with additive disturbance and parametric uncertainty. They assume  the system matrices to be affine functions, \emph{i.e.}, $(A(\theta),B(\theta))=(A_0,B_0)+\sum_{i=1}^p (A_i,B_i)\theta_i$, for known matrices $\{(A_i,B_i)\}_{i=0}^p$ and some parameter $\theta\in\Theta_0\subset \mathbb{R}^p$, with $\Theta_0$ being a known bounded convex polytope. In contrast, we consider a data-driven approach to estimate the set of system matrices compatible with the data.  Finally, the authors in \cite{bujarbaruah2021simple} consider a setting similar to that of \cite{lu2019robust}. The authors assume the system matrices to belong to convex compact sets with \textit{known} vertices, whereas we work with sets defined using matrix zonotopes.}

Recently, the authors of \cite{alanwar2022robust} proposed \textsc{ZPC}, a robust data-driven predictive control approach based on data-driven reachability analysis \cite{alanwar2021data} to control an uncertain linear system affected by bounded noise. The input-output data of the system is used to construct a matrix zonotope that contains all the possible matrices $(A,B)$ that are consistent with the data, which is then used to formulate a robust MPC approach based on the reachable states. Their formulation, however, is not a tube-based approach, and does not consider the problem of instability in the set of matrices that are consistent with the data. Other  robust formulations of data-driven predictive control approaches analyze the robustness of data-driven MPC from a different perspective, mainly by analyzing the resulting multiplicative model uncertainty in the Hankel matrices of the system \cite{huang2021robust,berberich2020data}.
\section{Problem Statement and Preliminaries}
We first provide some preliminary concepts on set representation,  and then state our control problem.
\subsection{Set Representations}

\begin{definition}[Zonotope \cite{kuhn1998rigorously}]
A Zonotope $\mathcal{Z}$ of dimension $n$, with $\gamma$ generators, is a set defined as
\begin{equation}
    \mathcal{Z} = \left\{x\in \mathbb{R}^n : x=c_\mathcal{Z}+G_\mathcal{Z} \beta, \|\beta\|_\infty \leq 1, \beta\in \mathbb{R}^\gamma \right\},
\end{equation}
where  $c_\mathcal{Z} \in \mathbb{R}^n$ is the center, and  $G_\mathcal{Z} =\begin{bmatrix}g_\mathcal{Z} ^{(1)},\dots, g_\mathcal{Z} ^{(\gamma)} \end{bmatrix}\in \mathbb{R}^{n\times \gamma}$, is the generator matrix. Furthermore, we define the shorthand $\mathcal{Z}=\langle c_\mathcal{Z},G_\mathcal{Z}\rangle$.
\end{definition}
Zonotopes are special polytopes, and are widely used in reachability analysis \cite{althoff2010reachability} due to their compact representation. Their images through linear mappings and their Minkowski sums can be efficiently computed. A linear mapping is defined as $T\zonotope{}=\{Tz: z\in \zonotope{}\}$ and the Minkowski sum between two zonotopes $\zonotope{1}, \zonotope{2}$ is computed as $\zonotope{1}\oplus \zonotope{2} = \langle c_{\zonotope{1}} + c_{\zonotope{2}}, \begin{bmatrix} G_{\zonotope{1}}, G_{\zonotope{2}}\end{bmatrix} \rangle$. For simplicity, we denote the sum using the $+$ sign instead of $\oplus$. \colored{Similarly, we use $\zonotope{1}-\zonotope{2}$ to denote $\zonotope{1}+(-1\zonotope{2})$.} We can also define the concept of matrix zonotope, which is a set of matrices.
\begin{definition}[Matrix Zonotope \cite{althoff2010reachability}]
A Matrix Zonotope $\mathcal{M}$ of dimension $(n,p)$, with $\gamma$ generators, is a set defined as
\begin{equation}
    \matrixzonotope{} = \left\{X\in \mathbb{R}^{n\times p} : X=C_\mathcal{M}+\sum_{i=1}^\gamma G_\mathcal{M}^{(i)} \beta_i, \|\beta\|_\infty \leq 1 \right\},
\end{equation}
where  $C_\mathcal{M} \in \mathbb{R}^{n\times p}$ is the center, and  $G_\mathcal{M} =\begin{bmatrix}G_\mathcal{M} ^{(1)},\dots, G_\mathcal{M} ^{(\gamma)} \end{bmatrix}\in \mathbb{R}^{n\times p\gamma}$ is the generator matrix and $\beta\in \mathbb{R}^{\gamma}$ is the generator factor. \colored{We define the following shorthand for matrix zonotopes $\mathcal{M}=\langle C_\mathcal{Z},G_\mathcal{M}\rangle$.}
\end{definition}
\colored{A linear mapping is defined as $T\mathcal{M}=\{TX: X\in \mathcal{M}\}$ (similarly, $T\mathcal{M}$).} We define the concatenation of two zonotopes, which is the horizontal stacking of two matrix zonotopes $
    \matrixzonotope{AB} = \left\{\begin{bmatrix}X_A & X_B\end{bmatrix}: X_A \in \matrixzonotope{A}, X_B\in \matrixzonotope{B}\right\}
$. From this definition, we let $\matrixzonotope{A^T}$ be the concatenation of a matrix zonotope $\matrixzonotope{A}$ with itself $T$ times, \emph{i.e.}, $\matrixzonotope{A^T} = \left\{\begin{bmatrix}X_A^{(1)} & X_A^{(2)}&\cdots &X_A^{(T)}\end{bmatrix}: X_A^{(i)} \in \matrixzonotope{A},i=1,\dots,T\right\}$.

\subsection{Problem Statement}
\noindent\textbf{Model.} We consider an uncertain discrete-time LTI model affected by process noise:
\begin{equation}\label{eq:system}
x_{t+1}=A_0x_t+B_0u_t+w_t,
\end{equation}
where $t \in \mathbb{Z}$ is the discrete time variable, $x_t\in \mathbb{R}^n$ is the state of the system, $u_t\in \mathbb{R}^m$ is the control signal, $A_0\in\mathbb{R}^{n\times n}, B_0\in \mathbb{R}^{n\times  m}$ are the unknown system matrices, and $w_t\in \mathbb{R}^n$ is the process noise.
We  make the following assumption of boundedness on the process noise, which does not necessarily need to be i.i.d.
\begin{assumption}
The process noise $w(t)$ lies in $\zonotope{w}=\langle c_{\zonotope{w}}, G_{\zonotope{w}}\rangle$, \emph{i.e.}, $w(t) \in \zonotope{w} \subset \mathbb{R}^n$ for every $t$. Furthermore, we assume that $0\in \zonotope{w}$. We denote by $\gamma_w$ the number of generators of $\zonotope{w}$.
\end{assumption}
The objective is to robustly control the uncertain system in \Cref{eq:system} for all possible noise realizations $w_t\in \zonotope{w}$. The pair $(A_0,B_0)$ is unknown, and we use data to develop a control algorithm. For a given trajectory $\{(u_k,x_k)\}_k$ of length $T$, define the following matrices:
\begin{align*}
X_+ &\coloneqq \begin{bmatrix}x_1 & \dots & x_T\end{bmatrix}, \quad X_-\coloneqq \begin{bmatrix}x_0 & \dots & x_{T-1}\end{bmatrix},\\
U_- &\coloneqq \begin{bmatrix}u_0 & \dots & u_{T-1}\end{bmatrix}.
\end{align*}

We make the following assumption, which states that persistent excitation is present in the data \cite{willems2005note}.

\begin{assumption}
The pair $(A_0,B_0)$ is unknown, and  the decision maker  has available one input-state trajectory $\mathcal{D}=(X_-,X_+,U_-)$ such that $\rank \begin{bmatrix} X_- \\ U_-\end{bmatrix}=n+m$.
\end{assumption}

The rank condition can be verified directly from data, and can be guaranteed for noise-free systems by choosing a persistently exciting input signal of order $n+1$ \cite{willems2005note,de2019formulas}. Since the pair $(A_0,B_0)$ is unknown, as well as the actual realization of the noise, there exist multiple pairs $(A,B)$ that are consistent with the data. We denote this set by $\Sigma_\mathcal{D}$:
\[\Sigma_\mathcal{D} \coloneqq \{(A,B): X_+ = AX_- + BU_-+ W_-, W_-\in \zonotope{w}\}.\]
% which we assume to be stabilizable.
% \begin{assumption}\label{assumption:stabilizability}
% $\Sigma_\mathcal{D}$ is stabilizable, i.e., for any $(A,B)\in \Sigma_{\data}$ there exists $K$ such that $(A+BK)$ is Schur.
% \end{assumption}
\noindent \textbf{Problem statement.}
Taking inspiration from \cite{lofberg2003minimax,bravo2006robust,calafiore2012robust,alanwar2022robust}, our objective is to robustly control the unknown system in \Cref{eq:system} using a receding horizon approach.

Specifically, the control objective is to minimize a sum of convex loss functions $\{\ell_k(x,u)\}_k$ over an $M$-steps horizon, while constraining the state of the system $x_t$ to  a \colored{bounded zonotope $\zonotope{x}$} at each time step, and the control signal $u_t$ to a \colored{bounded zonotope $\zonotope{u}$}. Furthermore, we assume that the initial condition \colored{$x_0$ belongs to  $\zonotope{x}$}. Finally, we solve the problem by using a receding horizon algorithm that at each iteration computes the optimal control signal over an horizon of $N\leq M$ steps.
Our approach consists of two phases: (1) an offline data-collection phase, to construct a set approximating and containing $\Sigma_\data$, a set of possible models consistent with the data collected from the true system; (2) an online control phase that solves a robust tube-based MPC problem.

\section{Method}
We start by presenting the first offline phase. It consists in collecting data from the true system and in building a matrix zonotope $\matrixzonotope{\data}$ that contains the  set $\Sigma_\data$. We then describe our online robust control problem and present a computationally efficient approach to solve it.

\subsection{Offline Learning Phase}
In the offline learning phase, we gather in ${\cal D}$ a system trajectory of length $T$, and construct the uncertainty set $\Sigma_\data$ using zonotopes.
Let $\matrixzonotope{\zonotope{w}^T}$ be the $T$-concatenation of the noise zonotope $\zonotope{w}$. From this $T$-concatenation, we can build a matrix zonotope ${\cal M}_\data$ containing $\Sigma_\mathcal{D}$. 
\begin{lemma}[Lemma 1 in \cite{alanwar2021data}]\label{lemma:sys_representation}
Given an input-state trajectory $\data$ of the system \cref{eq:system}, with the matrix $\begin{bmatrix}
X_-^\top & U_-^\top\end{bmatrix}^\top$ having full column rank, then $\Sigma_\data \subseteq \matrixzonotope{\data}$, where $\matrixzonotope{\data}$ is a  matrix zonotope defined as follows:
\begin{equation}
    \matrixzonotope{\data}=(X_+-\matrixzonotope{\zonotope{w}^T}) \begin{bmatrix}
    X_-\\ U_-
    \end{bmatrix}^\dagger.
\end{equation}
\end{lemma}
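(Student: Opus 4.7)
My plan is to start from the data equation defining $\Sigma_\data$ and invert it using the rank assumption, showing that every pair $(A,B)\in \Sigma_\data$ can be written in the form prescribed by the right-hand side of the claimed identity.

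First, I would fix an arbitrary pair $(A,B)\in \Sigma_\data$. By definition there exists a matrix $W_-$ whose $T$ columns each lie in $\zonotope{w}$, i.e., $W_-\in \matrixzonotope{\zonotope{w}^T}$, such that
\[
X_+ = AX_- + BU_- + W_- = \begin{bmatrix} A & B \end{bmatrix}\begin{bmatrix} X_- \\ U_- \end{bmatrix} + W_-.
\]
The rank hypothesis implies that the Moore--Penrose pseudoinverse of the stacked data matrix acts as a right inverse, so that post-multiplying by $\begin{bmatrix} X_- \\ U_- \end{bmatrix}^\dagger$ yields
\[
\begin{bmatrix} A & B \end{bmatrix} = (X_+ - W_-)\begin{bmatrix} X_- \\ U_- \end{bmatrix}^\dagger.
\]

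Next I would invoke two elementary closure properties of matrix zonotopes: (i) the translated set $X_+ - \matrixzonotope{\zonotope{w}^T}$ is again a matrix zonotope, obtained by shifting the center by $X_+$ and negating the generators; and (ii) the right-multiplication of a matrix zonotope by a fixed matrix produces another matrix zonotope, with center and generators individually right-multiplied by that matrix. Applying both in sequence, the right-hand side of the displayed identity belongs to $(X_+ - \matrixzonotope{\zonotope{w}^T})\begin{bmatrix} X_- \\ U_- \end{bmatrix}^\dagger = \matrixzonotope{\data}$, which gives $[A,B]\in \matrixzonotope{\data}$ and hence the inclusion $\Sigma_\data \subseteq \matrixzonotope{\data}$.

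The only delicate ingredient is the right-inverse identity, which is precisely where the persistency-of-excitation hypothesis enters; once this is in place, the remainder is a routine verification of the affine closure properties of matrix zonotopes. Accordingly, I do not anticipate a genuine obstacle beyond careful book-keeping of the zonotope arithmetic, and the proof should read off directly from the construction of $\matrixzonotope{\data}$.
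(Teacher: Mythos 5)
Your proposal is correct and is essentially the argument of Lemma~1 in the cited reference \cite{alanwar2021data}; the paper itself does not reproduce a proof, deferring entirely to that source. One small point of bookkeeping: the condition you actually need is that $\begin{bmatrix} X_- \\ U_- \end{bmatrix}$ has full \emph{row} rank $n+m$ (Assumption~2 of the paper), which is what makes the pseudoinverse a right inverse; the ``full column rank'' phrasing in the lemma statement is an imprecision, and your proof correctly uses the right-inverse property that follows from the row-rank condition.
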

\noindent Note that \cite{alanwar2021data} provides a precise characterization of $\Sigma_\data$, \emph{i.e.}, it is possible to derive a constrained matrix zonotope  $\constrainedmatrixzonotope{\data}$ that is equal to $\Sigma_\data$. However, $\constrainedmatrixzonotope{\data}$ is not easy to use in practice, and an approximate set containing $\Sigma_\data$ is needed, which motivates the use of $\matrixzonotope{\data}$. 
\subsection{Online Control Phase}
\noindent We first state our min-max robust control problem, and quantify its computational complexity. We then present our tube-based solution approach and explain how it addresses the aforementioned complexity issue.   

\subsubsection{Min-max robust control and its complexity}
Consider the following min-max  optimization problem over an horizon of $N$ steps and the uncertain set $\mathcal{F}_\data^N\coloneqq \{\colored{(A,B,w_0,\dots,w_{N-1}): (A,B)\in \matrixzonotope{\data}, w_i \in \zonotope{w}}\}$ :
\begin{equation}\label{problem:p_x}
\begin{aligned}
\bar{\mathcal{P}}_N(x_t): &\min_{u_{\colored{0}|t},\dots,u_{\colored{N-1}|t}} \max_{\colored{(A,B,w_0,\dots)}\in \mathcal{F}_\data^N}\quad  \sum_{k=\colored{0}}^{\colored{N-1}} \ell_k(x_{k|t},u_{k|t})\\
\textrm{s.t.} \quad 
&x_{k+1|t}=Ax_{k|t}+Bu_{k|t}+w_{k}\quad x_{\colored{0}|t}=x_t,\\
& x_{k+1|t} \in \zonotope{x}, u_{k|t} \in \zonotope{u},\quad \colored{k=0,\dots, N-1.}
\end{aligned}
\end{equation}
In general, solving \eqref{problem:p_x} is computationally prohibitive. Indeed,  the computational complexity (in number of floating operations) of the inner maximization problem scales at least as the number of vertices of the zonotope $\mathcal{F}_\data^N$. This number may in the worst case scale as $O((T\gamma_w-1)^{n(n+m)-1} + 2^{N\gamma_w})$ for varying $T$ and $N$. 

\begin{lemma}\label{lemma:complexity_min_max}
The inner maximization in \eqref{problem:p_x} amounts to checking at most $2\left( \sum_{i=0}^{n(n+m)-1}  {T\gamma_w -1 \choose i} +\sum_{i=0}^{nN-1} {N\gamma_w -1 \choose i} \right)$ points in $\mathcal{F}_\data^N$.\ifdefined\arxiv\else\footnote{\colored{ Refer to technical report \href{https://arxiv.org/abs/2209.03500}{https://arxiv.org/abs/2209.03500} for all the proofs and other details}\label{techreport}.}\fi
\end{lemma}
\ifdefined\arxiv
\begin{proof}
Since the uncertainties $(A,B,w_k)$ belong to a polytope, by linearity, for fixed control inputs, these uncertainties generate  set of predictions that is a polytope. Therefore, we only need to check the vertices of this polytope to compute the inner maximization  problem. The matrix zonotope $\matrixzonotope{\data}$ is of dimensionality $n^2+nm$, and consists of $\gamma_w T$ generators: consequently  $\matrixzonotope{\data}$ consist at most of $2 \sum_{i=0}^{n(n+m)-1}  {T\gamma_w -1 \choose i} \leq 2^{T\gamma_w}$ vertices \cite[Thm. 3.1]{ferrez2005solving}. Similarly, the matrix zonotope $\{(w_1,\dots, w_N): w_i \in \zonotope{w}, i=1,\dots, N\}$ has at-most $2\sum_{i=0}^{nN-1} {N\gamma_w -1 \choose i}\leq 2^{N\gamma_w}$ vertices.
\end{proof}
\else
\fi
We remark that it is possible to partly simplify the complexity issue by {\it over-approximating} the matrix zonotope $\matrixzonotope{\data}$ by an hypercube. By doing so, $\matrixzonotope{\data}$ can be approximated by an hypercube with $2^{n(n+m)}$ vertices (if $\gamma_w T \geq n(n+m)$). However, we cannot address the complexity issue arising due to the set $\{(w_1,\dots, w_N): w_i \in \zonotope{w}, i=1,\dots, N\}$ in a similar way. In fact, the resulting zonotope would have a number of vertices that scales exponentially in $N\gamma_w$, which remains computationally hard when $N$ is not small. To address the complexity issue, we advocate that a tube-based approach may achieve comparable level of robustness and performance, while being computationally more efficient. In fact, by using a tube-based approach we are able to remove the dependency on $N$.

\subsubsection{Tube-based robust control}
In tube-based MPC \cite{mayne2005robust} the problem  (\ref{problem:p_x}) is relaxed by not considering the actual worst realization of the noise sequence $w_t$.
 The idea is to control some nominal dynamics $\bar x_t$ of the system, and to make sure that the error $e_t=x_t-\bar x_t$ is bounded. Our approach consists in devising an algorithm that can take advantage of the theory of zonotopes to guarantee robustness, while making sure that the resulting error zonotope of $e_t$ is bounded in time. We begin by considering the nominal dynamics of the system.\vspace{4pt}

\noindent \textbf{Nominal and error dynamics.} Consider some nominal, user-chosen matrices $(\bar A,\bar B) \in \matrixzonotope{\data}$, and define the nominal predictive dynamics $\bar x_t$ and error signal $e_t$ as:
\begin{equation}\label{eq:nominal_dynamics}
    \bar x_{t+1} = \bar A \bar x_t + \bar B \bar u_t,\quad e_t = x_t-\bar x_t,
\end{equation}
where $\bar u_t$ is the nominal control signal, which is computed by the receding horizon algorithm. Note that, as shown later, stability-wise it is important that $(\bar A,\bar B) $ belong to $\matrixzonotope{\data}$.

We write the true matrices $(A_0,B_0)\in \Sigma_{\data}$ as $A_0=\bar A+\Delta A_0$ and $B_0=\bar B+\Delta B_0$ for some $(\Delta A_0,\Delta B_0)$. Then, \eqref{eq:system} is equivalent to: 
\begin{equation}
    x_{t+1} = \bar A x_t + \bar B u_t + w_t + \Delta A_0 x_t + \Delta B_0 u_t.
\end{equation}
We treat $w_t + \Delta A_0 x_t + \Delta B_0 u_t$ as an additive disturbance of the system. We consider a control signal $u_t$  defined as
\begin{equation}
    u_t = Ke_t + \bar u_t,
\end{equation}
where the gain matrix $K\in \mathbb{R}^{m\times n}$ is used to stabilize the error dynamics. Then, we can derive the dynamics of the error $
    e_{t+1} = (\bar A + \bar B K)e_t + \Delta A_0(e_t + \bar x_t) +\Delta B_0 u_t + w_t.
$
From the latter expression, we deduce that $e_t$ belongs to a well-defined zonotope $\zonotope{e,t}$, \emph{i.e.}, $e_t\in \zonotope{e,t},\forall t\geq 0$.
\begin{lemma}[Error zonotope]\label{lemma:error_zonotope_t}
    \colored{Let $\zonotope{e,0}=\langle e_0,0\rangle$}. At time $t\ge 0$, the error zonotope is:
    \begin{equation}
    \begin{aligned}
        \zonotope{e,t} = (A_0+B_0K)^te_0 %\underbrace{\zonotope{e,0}}_{=\langle e_0,0\rangle} + 
        +\sum_{k=0}^{t-1}(A_0+B_0K)^k\zonotope{\tilde w,t-k-1},
    \end{aligned}
    \end{equation}
    with $\zonotope{\tilde w,t} \coloneqq \Delta A_0\bar x_t + \Delta B_0 \bar u_t   + \zonotope{w}$.
    Moreover, if \colored{$A_0+B_0K$ is Schur stable}, and if $(\bar x_t, \bar u_t)_{t\ge 0}$ is a \colored{uniformly bounded sequence}, then $\zonotope{e,t}$ is a \colored{uniformly bounded set} for any $t\geq 0$.
\end{lemma}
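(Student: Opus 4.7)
The plan is to unroll the one-step error recursion to express $e_t$ as an affine function of $e_0$ and the past disturbances, read off the claimed zonotope using elementary zonotope arithmetic, and then invoke Schur stability of $\closedloopA := A_0+B_0K$ to obtain boundedness via a geometric-series argument.

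First I would revisit the error dynamics derived immediately before the statement. Substituting $u_t = K e_t + \bar u_t$ together with $A_0 = \bar A + \Delta A_0$ and $B_0 = \bar B + \Delta B_0$ into $x_{t+1}-\bar x_{t+1}$ and regrouping the $K e_t$ terms yields the compact recursion
$$e_{t+1} = \closedloopA\, e_t + \Delta A_0\,\bar x_t + \Delta B_0\,\bar u_t + w_t.$$
A short induction on $t$ then gives the closed form
$$e_t = \closedloopA^t e_0 + \sum_{k=0}^{t-1}\closedloopA^k\bigl(\Delta A_0\,\bar x_{t-k-1} + \Delta B_0\,\bar u_{t-k-1} + w_{t-k-1}\bigr).$$

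Next I would identify the range of the parenthesised term. Since $(\bar x_t,\bar u_t)$ is a deterministic trajectory while $w_{t-k-1}$ ranges freely over $\zonotope{w}$, the bracket lies in the translated noise zonotope $\zonotope{\tilde w,t-k-1}= \Delta A_0 \bar x_{t-k-1} + \Delta B_0 \bar u_{t-k-1} + \zonotope{w}$. Applying the linear-image rule $T\zonotope{}=\{Tz:z\in\zonotope{}\}$ to each $\closedloopA^k\zonotope{\tilde w,t-k-1}$, together with the Minkowski-sum rule recalled in the preliminaries, produces exactly the zonotope $\zonotope{e,t}$ claimed in the lemma, with the deterministic offset $\closedloopA^t e_0$ absorbed in the center.

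For the second claim I would use $\rho(\closedloopA)<1$ with Gelfand's formula (or Jordan form) to obtain constants $C\ge 1$ and $\lambda\in(0,1)$ with $\|\closedloopA^k\|\le C\lambda^k$ for all $k\ge 0$. Boundedness of $(\bar x_t,\bar u_t)_{t\ge 0}$ implies that both the center and the generator matrix of $\zonotope{\tilde w,t}$ have norms uniformly bounded by some constant $R$. A generic $e\in\zonotope{e,t}$ then satisfies $\|e\|\le C\lambda^t\|e_0\| + R\sum_{k=0}^{t-1}C\lambda^k \le C\|e_0\| + RC/(1-\lambda)$, a bound independent of $t$, so the family $\{\zonotope{e,t}\}_{t\ge 0}$ is uniformly bounded. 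The main obstacle is mostly bookkeeping: reconciling the form of the dynamics written in the excerpt (in terms of $\bar A+\bar BK$ plus $\Delta B_0 u_t$) with the compact form above (in terms of $A_0+B_0K$ plus $\Delta B_0\bar u_t$), and being explicit that the substantive statement is uniform boundedness in $t$, since every individual zonotope is automatically bounded as a finite Minkowski sum.
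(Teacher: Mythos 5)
Your proposal is correct and follows essentially the same route the paper sketches: it unrolls the one-step error recursion $e_{t+1}=(A_0+B_0K)e_t+\Delta A_0\bar x_t+\Delta B_0\bar u_t+w_t$ to read off $\zonotope{e,t}$, then uses Schur stability of $A_0+B_0K$ together with boundedness of $(\bar x_t,\bar u_t)$ and a geometric series to get uniform boundedness. The paper only states this argument in one line, so your write-up is simply a fully worked-out version of the intended proof, including the correct regrouping of the $\Delta B_0 K e_t$ term that turns $\bar A+\bar BK$ into $A_0+B_0K$.
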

$\zonotope{e,t}$ represents the set of reachable errors at time $t$. The proof stems from the expression of $e_{t+1}$ and the fact that the pair  $(\bar x_t, \bar u_t)$, is bounded. The boundedness of this pair follows from the MPC formulation provided below. The idea is to solve a receding-horizon optimization problem that at each step bounds the nominal dynamics $\bar x_t$, so that $\bar x_t + \zonotope{e,k} \subseteq \zonotope{x}$, so as to guarantee that the true dynamics will belong to $\zonotope{x}$. Similarly, we also constrain the signal $\bar u_t$. 

There are two problems left to solve: (i) the zonotope $\zonotope{e,t}$ cannot be used in practice since the true matrices $(A_0,B_0)$ are unknown; (ii) we need to guarantee the stability of $A_0+B_0K$. Regarding the former problem, the idea is to derive a conservative approximation $\bar{\mathcal{Z}}_{e,t}$ of $\zonotope{e,t}$, so that $\zonotope{e,t} \subseteq  \bar{\mathcal{Z}}_{e,t}$. The latter problem can be solved by finding $K$ that is stabilizing for all $(A,B)$ in $\Sigma_\data$.  \vspace{4pt}

\noindent \textbf{Conservative approximation of the error zonotope.}
As already mentioned, since the pair $(A_0,B_0)$ is unknown, we cannot consider directly $\zonotope{e,t}$ in the optimization algorithm that we wish to solve. Therefore, we construct an approximation $\bar{\mathcal{Z}}_{e,t}$ of $\zonotope{e,t}$. First, observe that from \cref{lemma:error_zonotope_t}:
\begin{equation}
    \zonotope{e,t} =(A_0+B_0K)\zonotope{e,t-1} + \Delta A_0\bar x_t+ \Delta B_0 \bar u_t   + \zonotope{w}.
\end{equation}
Define $\matrixzonotope{\data, K} \coloneqq \matrixzonotope{\data}\begin{bmatrix}
    I_n \\ K
    \end{bmatrix}$ and \colored{$\matrixzonotope{\Delta} \coloneqq \matrixzonotope{\data}- \begin{bmatrix}\bar A & \bar B\end{bmatrix}$}. Then, we obtain the following approximation.
\begin{lemma}[Error zonotope approximation]
    Let $\bar{\mathcal{Z}}_{e,t}$ be defined as 
    \begin{equation}
    \bar{\mathcal{Z}}_{e,t} \coloneqq \matrixzonotope{\data, K}\bar{\mathcal{Z}}_{e,t-1} + \matrixzonotope{\Delta }\begin{bmatrix}
    \bar x_t \\ \bar u_t
    \end{bmatrix} + \zonotope{w},
    \end{equation}
    with $\bar{\mathcal{Z}}_{e,0}=\zonotope{e,0}$. Then, $\zonotope{e,t} \subseteq \bar{\mathcal{Z}}_{e,t}$ for $t\geq 0$.
\end{lemma}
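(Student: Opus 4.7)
The plan is a straightforward induction on $t$. The base case $t=0$ holds by definition since $\bar{\mathcal{Z}}_{e,0}=\zonotope{e,0}$. For the inductive step, assume $\zonotope{e,t-1}\subseteq \bar{\mathcal{Z}}_{e,t-1}$ and use the recursive form of $\zonotope{e,t}$ already derived just above the lemma, namely
\begin{equation*}
\zonotope{e,t}=(A_0+B_0K)\zonotope{e,t-1}+\Delta A_0\bar x_t+\Delta B_0\bar u_t+\zonotope{w}.
\end{equation*}

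The key observations that make each of the three terms fit inside the corresponding term of $\bar{\mathcal{Z}}_{e,t}$ are purely algebraic set-containment facts. First, factor
\begin{equation*}
A_0+B_0K=\begin{bmatrix}A_0 & B_0\end{bmatrix}\begin{bmatrix}I_n\\ K\end{bmatrix};
\end{equation*}
since $(A_0,B_0)\in\Sigma_\data\subseteq\matrixzonotope{\data}$ by \Cref{lemma:sys_representation}, the matrix $A_0+B_0K$ lies in $\matrixzonotope{\data,K}$. Combined with the induction hypothesis and monotonicity of the linear image of a zonotope, this yields $(A_0+B_0K)\zonotope{e,t-1}\subseteq \matrixzonotope{\data,K}\bar{\mathcal{Z}}_{e,t-1}$. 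Second, writing $A_0=\bar A+\Delta A_0$, $B_0=\bar B+\Delta B_0$, the difference $\begin{bmatrix}\Delta A_0 & \Delta B_0\end{bmatrix}=\begin{bmatrix}A_0 & B_0\end{bmatrix}-\begin{bmatrix}\bar A & \bar B\end{bmatrix}$ belongs to $\matrixzonotope{\Delta}$, so that $\Delta A_0\bar x_t+\Delta B_0\bar u_t\in \matrixzonotope{\Delta}\begin{bmatrix}\bar x_t^\top & \bar u_t^\top\end{bmatrix}^\top$. The noise term $\zonotope{w}$ appears unchanged in both sets.

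Finally, I would close the induction by invoking monotonicity of the Minkowski sum: if $S_i\subseteq \hat S_i$ for $i=1,2,3$, then $S_1+S_2+S_3\subseteq \hat S_1+\hat S_2+\hat S_3$. Applying this to the three inclusions above gives $\zonotope{e,t}\subseteq \bar{\mathcal{Z}}_{e,t}$, completing the induction.

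I do not expect any serious obstacle here; the only subtlety is notational discipline. One must treat $\matrixzonotope{\data,K}$ as acting on sets by $\matrixzonotope{\data,K}\,S=\{Ms:M\in\matrixzonotope{\data,K},\,s\in S\}$ and similarly for $\matrixzonotope{\Delta}$, and make clear that the pointwise identity $(A_0+B_0K)=\begin{bmatrix}A_0 & B_0\end{bmatrix}\begin{bmatrix}I_n\\ K\end{bmatrix}$ upgrades to a set inclusion after one allows $(A_0,B_0)$ to range over $\matrixzonotope{\data}$. Once this is made precise, the three containments assemble into the claim in a single line, and the result follows.
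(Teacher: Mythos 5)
Your proposal is correct and follows essentially the same route as the paper's own proof: induction on $t$, with the containment established term by term via the factorization $A_0+B_0K=\begin{bmatrix}A_0 & B_0\end{bmatrix}\begin{bmatrix}I_n\\ K\end{bmatrix}\in\matrixzonotope{\data,K}$, the membership $\begin{bmatrix}\Delta A_0 & \Delta B_0\end{bmatrix}\in\matrixzonotope{\Delta}$, and monotonicity of linear images and Minkowski sums. No gaps; your explicit remark about upgrading the pointwise identity to a set inclusion is a slight clarification over the paper's terser wording, but the argument is the same.
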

\begin{proof}
We prove it by induction. Obviously it holds for $t=0$. For a fixed $t>0$ we observe that by construction of $\matrixzonotope{\Delta}$ it holds that $\Delta A_0 \bar x_t +\Delta B_0 \bar u_t \in \matrixzonotope{\Delta}(\langle \bar x_t, 0\rangle \times\langle \bar u_t, 0\rangle)$. Therefore
\colored{$\Delta A_0 \bar x_t +\Delta B_0 \bar u_t +\zonotope{w} \subseteq \matrixzonotope{\Delta }\left(\langle \bar x_t, 0\rangle \times\langle \bar u_t, 0\rangle\right) + \zonotope{w}$}. Using the induction step, since $\zonotope{e,t-1}\subseteq \bar{\mathcal{Z}}_{e,t-1}$, and $A_0+B_0K = \begin{bmatrix}
A_0 & B_0
\end{bmatrix}\begin{bmatrix}
I_n \\ K
\end{bmatrix}\in \matrixzonotope{\data}\begin{bmatrix}
I_n \\ K
\end{bmatrix}$, it follows that $(A_0+B_0K)\zonotope{e,t-1} \subseteq\matrixzonotope{\data, K}\bar{\mathcal{Z}}_{e,t-1}$.
\end{proof}
\colored{To guarantee the stability of the new error-zonotope, we need the following assumption that there exists a common quadratic Lyapunov function.
\begin{assumption}\label{assumption:stabilizability_m_data}
There exists  $P=P^\top, P\succ 0,$ such that  $\forall G\in\matrixzonotope{\data,K}$ the inequality $G^\top PG-P \prec 0$ is satisfied.
\end{assumption}
The previous assumption constraints the vertices of the convex set $\matrixzonotope{\data,K}$ to have a common Lyapunov function. The assumption can be relaxed by considering multiple Lyapunov functions, as in \cite[Thm. 8]{hu2010non} (which we omitted for brevity).}
Then we obtain the following stability result for $\bar{\mathcal{Z}}_{e,t}$.
\begin{lemma}[Stability of the error zonotope]\label{lemma:stability}
Given \cref{assumption:stabilizability_m_data} and $e_0=0$, then there exists a zonotope $\bar{\mathcal{Z}} \subset \mathbb{R}^n$ that satisfies: (i) $ \bar{\mathcal{Z}}_{e,t} \subset \bar{\mathcal{Z}} $ for every $t \geq 0$; (ii) $\bar{\mathcal{Z}} $ is an invariant set, i.e., for $e \in \bar{\mathcal{Z}} \Rightarrow \matrixzonotope{\data, K}e + \bar w\in \bar{\mathcal{Z}} $, for all $\tilde w \in \matrixzonotope{\Delta }(\zonotope{x} \times \zonotope{u}) + \zonotope{w}$.
\end{lemma}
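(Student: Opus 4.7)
The plan is to collapse the time-varying recursion for $\bar{\mathcal{Z}}_{e,t}$ into a time-invariant one by uniformly bounding its affine part, and then to take the candidate invariant zonotope $\bar{\mathcal{Z}}$ as the limit (an infinite Minkowski sum) of an increasing chain of partial sums, mimicking the classical construction of a robust positively invariant set in tube-based MPC.

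First, I would introduce the time-independent ``effective disturbance''
\[
\bar{\mathcal{W}} \coloneqq \matrixzonotope{\Delta}\bigl(\zonotope{x}\times\zonotope{u}\bigr) + \zonotope{w},
\]
which is a bounded zonotope because each of its building blocks is. Since $(\bar x_t,\bar u_t)\in \zonotope{x}\times\zonotope{u}$ by hypothesis, one has $\matrixzonotope{\Delta}\begin{bmatrix}\bar x_t\\\bar u_t\end{bmatrix}+\zonotope{w}\subseteq\bar{\mathcal{W}}$ for every $t$, so the previous lemma yields the cleaner recursion $\bar{\mathcal{Z}}_{e,t}\subseteq \matrixzonotope{\data,K}\,\bar{\mathcal{Z}}_{e,t-1}+\bar{\mathcal{W}}$. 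With $e_0=0$, a straightforward induction then gives
\[
\bar{\mathcal{Z}}_{e,t} \subseteq S_t \coloneqq \sum_{k=0}^{t-1}\matrixzonotope{\data,K}^{k}\,\bar{\mathcal{W}},
\]
where $\matrixzonotope{\data,K}^{k}$ denotes the set of all $k$-fold products of matrices drawn from $\matrixzonotope{\data,K}$. I would then define the candidate invariant set as $\bar{\mathcal{Z}} \coloneqq \lim_{t\to\infty} S_t$ in the Hausdorff sense, provided the limit exists.

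To ensure that the limit exists and is a bounded zonotope, I would invoke the stabilizing assumption on $K$ in its quadratic form: because $\matrixzonotope{\data}$ is a compact polytopic set of matrices on which $A+BK$ is Schur, a common quadratic Lyapunov function can be obtained from the generators of $\matrixzonotope{\data}$, yielding $P\succ 0$ and $\rho<1$ with $\|(A+BK)v\|_P\le \rho\|v\|_P$ uniformly on $\matrixzonotope{\data}$. Iterating this bound gives $\sup_{z\in \matrixzonotope{\data,K}^{k}\bar{\mathcal{W}}}\|z\|_P\le \rho^k\sup_{w\in\bar{\mathcal{W}}}\|w\|_P$, so the partial sums $S_t$ form an increasing, uniformly bounded family of zonotopes and therefore admit a zonotopic Hausdorff limit $\bar{\mathcal{Z}}$. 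Property (i) follows immediately: $\bar{\mathcal{Z}}_{e,t}\subseteq S_t\subseteq \bar{\mathcal{Z}}$ for every $t\ge 0$.

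Invariance (property (ii)) reduces to a telescoping identity. Using the set inclusion $\matrixzonotope{\data,K}(X+Y)\subseteq \matrixzonotope{\data,K}X + \matrixzonotope{\data,K}Y$ (valid for Minkowski sums of sets acted on by a matrix set) one obtains
\[
\matrixzonotope{\data,K}\,\bar{\mathcal{Z}} + \bar{\mathcal{W}} \subseteq \bar{\mathcal{W}} + \sum_{k=1}^{\infty}\matrixzonotope{\data,K}^{k}\,\bar{\mathcal{W}} = \bar{\mathcal{Z}},
\]
which is precisely the required invariance statement for every $e\in \bar{\mathcal{Z}}$ and every $\tilde w\in \bar{\mathcal{W}}$. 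The main obstacle I expect is the boundedness step: $K$ stabilizing \emph{each} $(A,B)\in\matrixzonotope{\data}$ pointwise is in general strictly weaker than the joint spectral radius of $\matrixzonotope{\data,K}$ being below one, and the latter is what is needed to control arbitrary products appearing in $\matrixzonotope{\data,K}^{k}$. The cleanest way to close this gap is to interpret ``stabilizing'' in the quadratic-stability sense, i.e., to assume the existence of a common $P\succ 0$ with $(A+BK)^\top P(A+BK)-P\prec 0$ for all $(A,B)\in\matrixzonotope{\data}$, a condition that can be certified via a finite LMI over the generators of $\matrixzonotope{\data}$ and which is implicitly what the downstream analysis of the paper appears to require.
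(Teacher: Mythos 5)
Your proposal follows essentially the same route as the paper's proof: both replace the time-varying affine term by the uniform disturbance set $\matrixzonotope{\Delta}(\zonotope{x}\times\zonotope{u})+\zonotope{w}$ (the paper calls it $V$), unroll the recursion into partial Minkowski sums, establish geometric decay of the summands, and obtain $\bar{\mathcal{Z}}$ as the Hausdorff limit of a Cauchy sequence in the spirit of the Kolmanovsky--Gilbert construction of a minimal robust invariant set; invariance then follows by the same telescoping argument. The substantive difference is that you explicitly confront a subtlety the paper's proof glosses over. The recursion $\bar{\mathcal{Z}}_{e,t}=\matrixzonotope{\data,K}\bar{\mathcal{Z}}_{e,t-1}+\cdots$ unrolls into \emph{arbitrary products} $M_1\cdots M_k$ of matrices drawn from $\matrixzonotope{\data,K}$, whereas the paper defines its reachable-error sets $E_t$ using powers $(A+BK)^k$ of a \emph{single} matrix and bounds $\max_{(A,B)\in\matrixzonotope{\data}}(A+BK)^tV\subset\mu\lambda^t\langle 0,I_n\rangle$, which controls only the latter. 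As you correctly observe, pointwise Schur stability of every $A+BK$ does not in general imply that the joint spectral radius of the whole set is below one, so the geometric-decay step for $\bar{\mathcal{Z}}_{e,t}$ genuinely requires the stronger hypothesis you supply --- a common quadratic Lyapunov function $P\succ 0$ with $(A+BK)^\top P(A+BK)-P\prec 0$ over $\matrixzonotope{\data}$, which is precisely the LMI the paper later imposes when selecting $K$. Your version is therefore the more defensible one: either the lemma's hypothesis should be read in this quadratic-stability sense, or the decay bound applies only to the single-matrix powers appearing in $\zonotope{e,t}$ rather than to the approximation $\bar{\mathcal{Z}}_{e,t}$. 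The remaining steps of your argument (monotone, uniformly bounded partial sums admitting a zonotopic Hausdorff limit, and the inclusion $\matrixzonotope{\data,K}(X+Y)\subseteq\matrixzonotope{\data,K}X+\matrixzonotope{\data,K}Y$ used for the telescoping invariance step) are sound and match the intended structure of the paper's proof.
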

\begin{proof}
\colored{Define the disturbance set at time $t$ as:
\begin{equation}
\begin{aligned}
    V &= \left\{ \Delta  \begin{bmatrix}
    \bar x \\ \bar u
    \end{bmatrix} +w: 
 \Delta \in \matrixzonotope{\Delta} , \bar x \in \zonotope{x}, \bar u \in \zonotope{u}, w \in \zonotope{w}\right\},
\end{aligned}
\end{equation}
and note that the set of reachable errors at time $t$ is $E_t = \left\{GE_{t-1}+V: G \in \matrixzonotope{\data, K} \right\},$ with $E_0 = \{0\}$. Using \cref{assumption:stabilizability_m_data}, and the result from \cite{goebel2006dual} (or see also \cite[Thm. 8]{hu2010non}), we find that for $V=0$ the linear difference inclusion $q_t\in \{G q_{t-1}: G\in \matrixzonotope{\data, K}\}$ is exponentially asymptotically stable for any $q_0\in \mathbb{R}^n$. Therefore, the result follows from the fact that the disturbance set $V$ is bounded and compact, with $0\in V$ (since $(\bar A,\bar B)\in \matrixzonotope{\data}$ and $0\in \zonotope{w}$).}
\end{proof}% Consequently, since $\rho(A+BK)<1$ for all $(A,B)\in \matrixzonotope{\data}$, there exist $\mu>0$ and $\lambda \in (0,1)$ such that for all $t>0$, $\max_{(A,B)\in \matrixzonotope{\data}}(A+BK)^t V  \subset \mu \lambda^t \langle 0, I_n \rangle$. The rest of the proof follows as in \cite[Thm. 4.1]{kolmanovsky1998theory} by noting that $E_t$ is a Cauchy sequence in the Hausdorff metric.
% and $(A_j,B_j)$ varies in the set $\matrixzonotope{\data}$ for $j=1,\dots, t-1$.
% % \begin{align}
% %     E_t &= \left\{(A+BK)E_{t-1}+V: (A,B) \in \matrixzonotope{\data} \right\},\\
% %     &=\left\{ \left(I + \sum_{n=1}^{t-1}\prod_{j=1}^n (A_j+B_j K) \right)V: (A_j,B_j) \in \matrixzonotope{\data} \right\},
% % \end{align}
\noindent \textbf{Optimization problem.} We are now ready to present our algorithm. Define $\theta=(\bar A, \bar B,K)$ to be the parameter of the problem. Then, the optimization problem is formulated as:
\begin{equation}\label{problem:robust_tz-ddpc}
    \begin{aligned}
    {\mathcal{P}}_N&(e_t, \bar x_t,\theta): \min_{\colored{\bar u_{0|t},\dots,\bar u_{N-1|t}}} \quad  \sum_{\colored{k=0}}^{\colored{N-1}} \ell_k(\bar x_{k|t},\bar u_{k|t})\\
    &\textrm{s.t.} \quad 
    \bar x_{k+1|t}=\bar A \bar x_{k|t}+ \bar B \bar u_{k|t},\quad \bar x_{\colored{0}|t}=\bar x_t,\\
    & \bar{\mathcal{Z}}_{e,k+1|t}= \matrixzonotope{\data,K}\bar{\mathcal{Z}}_{e,k|t} + \matrixzonotope{\Delta }\begin{bmatrix}
    \bar x_{k|t}\\ \bar u_{k|t}
    \end{bmatrix}+ \zonotope{w},\\
    & \bar{\mathcal{Z}}_{e,k|t}+\bar x_{k|t} \subseteq \zonotope{x},\quad \bar{\mathcal{Z}}_{e,\colored{0}|t}= e_t,\\
    & K\bar{\mathcal{Z}}_{e,k|t}+\bar u_{k|t} \subseteq \zonotope{u},\quad \colored{k=0,\dots,N-1}.
    \end{aligned}
\end{equation}
\begin{algorithm}[t] 
\caption{\textsc{TZ-DDPC: Tube-based Zonotopic Data-Driven Predictive Control}}
\begin{algorithmic}[1]\label{algo:tzddpc}
    \REQUIRE Data $\data$, zonotopes $(\zonotope{w}, \zonotope{x}, \zonotope{u})$, horizons $(N,M)$
    \smallskip
    \STATE Use $\data$ to compute $\matrixzonotope{\data}$, choose $(A_n,B_n) \in \matrixzonotope{\data}$ and feedback gain $K$. Set $\theta \gets (A_n,B_n,K)$.
    \STATE Set $t\gets 0, \bar x_t \gets x_t, e_t \gets 0 $.
    \REPEAT
        \STATE Solve $\mathcal{P}_N(e_t,\bar x_t,\theta)$ in problem (\ref{problem:robust_tz-ddpc}) to get $\bar u_t^\star = \{\bar u_{\colored{0}|t}^\star, \dots, \bar u_{\colored{N-1}|t}^\star\}$, $\bar x_t^\star = \{\bar x_{\colored{0}|t}^\star, \cdots, \bar x_{\colored{N-1}|t}^\star\}$. 
        \STATE Set $\bar x_{t+1}\gets\bar x_{1|t}^\star$. Apply control signal $u_t = K e_t +\bar u_{0|t}^\star$ and observe $e_{t+1} = x_{t+1}-\bar x_{t+1}$.
        \STATE Set $t \gets t+1$.
    \UNTIL{$t \leq M$}
\end{algorithmic}
\end{algorithm}
$ {\mathcal{P}}_N(e_t, \bar x_t,\theta)$ can be cast as a convex problem, and its solution yields the optimal control sequence $\bar u_t^\star = \{\bar u_{\colored{0}|t}^\star, \dots, \bar u_{\colored{N-1}|t}^\star\}$ and the associated optimal nominal state sequence $\bar x_t^\star = \{\bar x_{\colored{0}|t}^\star, \cdots, \bar x_{\colored{N-1}|t}^\star\}$.  By repeatedly solving this optimization problem, we obtain the receding-horizon procedure in Algorithm \ref{algo:tzddpc}.

Furthermore, it is straightforward to observe that \colored{if at time $t=0$ Alg. \ref{algo:tzddpc} is feasible  $\forall x_0 \in \zonotope{x}$, then it is feasible at every iteration $0\leq t\leq M$, and $\forall t\geq 0$ the system satisfies $x_t\in \zonotope{x}, u_t\in\zonotope{u}$  under the process noise $w_t\in \zonotope{w}$.}\vspace{4pt}
% \ifdefined\arxiv Note that, depending on the application,  additional convex constraints can be imposed on $(\bar x, \bar u)$, and other modifications can be added by considering $\bar x_{k|t} + c_{\bar{\mathcal{Z}}_{e,k+1|t}}$, where $c_{\bar{\mathcal{Z}}_{e,k+1|t}}$ is the center of the error zonotope at step $k+1|t$.\else\fi
% \begin{lemma}[Feasibility and robust constraint satisfaction\ifdefined\arxiv\else\footnote{Due to a lack of space, please refer to technical report at \href{https://arxiv.org/abs/2209.03500}{https://arxiv.org/abs/2209.03500} for the proofs of \cref{lemma:feasibility} and \ref{lemma:robustnessK}.} \fi]\label{lemma:feasibility}
% If at time $t=0$ Alg. \ref{algo:tzddpc} is feasible  $\forall x_0 \in \zonotope{x}$, then it is feasible at every iteration $0\leq t\leq M$, and $\forall t\geq 0$ the system satisfies $x_t\in \zonotope{x}, u_t\in\zonotope{u}$ under the process noise $w_t\in \zonotope{w}$.
% \end{lemma}
% The proof trivially follows from the definition of problem (\ref{problem:robust_tz-ddpc}), and the fact that $x_0\in\zonotope{x}$.\\\\
% \ifdefined\arxiv
% \begin{proof}
% This is a standard argument and can be easily verified. By the feasibility of $\mathcal{P}_N$ in  (\ref{problem:robust_tz-ddpc}) at time $t=0$ we are guaranteed that $x_{1} \in x_0 + \bar{\mathcal{Z}}_{e, 2|0} \subseteq \zonotope{x}$. Recursively, for any $t$ we see that $x_{t+1} \in \bar x_{t+1} +\bar{\mathcal{Z}}_{e, 2|t} \subseteq \zonotope{x}$, etc.
% \end{proof}
% \else\fi

\colored{\noindent \textbf{Computational simplification.} We  propose a simple change to ease the computational burden of the algorithm by taking advantage of the stability induced by $K$.  Define the operator $T_{\data,K}\zonotope{} = \matrixzonotope{\data, K} \zonotope{}$, so that  $T_{\data,K}^n \zonotope{}= T_{\data,K} \left(T_{\data,K}^{n-1}\zonotope{}\right)$. Then, $\bar{\mathcal{Z}}_{e,t}$ can be recursively written as
\begin{equation}
    \bar{\mathcal{Z}}_{e,t} = T_{\data,K}^t \zonotope{e,0} +\sum_{k=0}^{t-1} T_{\data,K}^{k} \left[\matrixzonotope{\Delta }\begin{bmatrix}
    \bar x_{t-k-1}\\ \bar u_{t-k-1}
    \end{bmatrix} +\zonotope{w} \right].
\end{equation}
In light of \cref{lemma:stability}, the user may consider approximating the latter term in the equation  as
\begin{equation}\label{eq:simplification}
\mu \lambda^{k_0} \langle 0, I_n \rangle+ \sum_{k=0}^{k_0-1} T_{\data,K}^{k} \left (\matrixzonotope{\Delta }\begin{bmatrix}
    \bar x_{t-k-1}\\ \bar u_{t-k-1}
    \end{bmatrix} +\zonotope{w} \right),
\end{equation}
where $\lambda \in(0,1)$ depends on the spectrum of $P$, and $\mu>0$ depends on the bound of the disturbance term (see also \cite{bof2018lyapunov}). The parameter $k_0$ accounts only for the last $k_0$ disturbances.}\\

\subsubsection{Selection of a feedback gain $K$}\label{subsubsec:k} To conclude, we consider the problem of identifying a  gain $K$. For example, $K$ can be calculated by solving the following LMI
\begin{equation}\label{eq:ens}
    (A+BK)^\top P (A+BK)-P \prec 0, \quad \forall (A,B)\in \matrixzonotope{\data},
\end{equation}
for some symmetric $P\succ 0$. To solve the above problem, one only needs to consider the vertices of $\matrixzonotope{\data}$ \cite{ben2001lectures,vidyasagar2001probabilistic}, which, if approximated by an hypercube, has $2^{n(n+m)}$ vertices. \colored{However, this computation is feasible when $n$ and $m$ are not too large (for $n=5$, $m=1$, there are $2^{30}$ vertices).} Alternatively, we propose two methods based on random sampling.
%according  to some distribution $\mathbb{P}$ over $\matrixzonotope{\data}$. 
%These methods  are computationally efficient, and  consists in sampling pairs $(A,B)\in \matrixzonotope{\data}$. Consequently, we obtain  probabilistic guarantees on the stability of the closed loop system.  
We analyse the following  two  problems: (1) that of verifying that a given $K$ is stabilizing \colored{(\textit{see the technical report})}; (2) the problem of computing a stabilizing $K$.  \vspace{4pt}
\ifdefined\arxiv

\noindent \textbf{Verification of $K$ through random sampling.} Let $\mathbb{P}^N$ be the $N$-fold product of $\mathbb{P}$. We verify whether a given $K$ is stabilizing using a batch  $\omega_N = \{(A^{(1)}, B^{(1)}),\cdots, (A^{(N)}, B^{(N)})\} \in \matrixzonotope{\data}^N$ of  $N$ i.i.d. samples drawn according to $\mathbb{P}^N$ over $\matrixzonotope{\data}$. 
%Sampling from a polytope is greatly simplified when considering a zonotope, since we only need to sample the vector of generating factors $\beta$ s.t. $\|\beta\|_\infty \leq 1$.
For a pair $(A,B)$, define $g_K(A,B) \coloneqq \mathbf{1}_{\rho(A+BK)\geq 1}$ to be a binary function that returns $1$ if $\rho(A+BK)\geq1$, where $\rho$ is the spectral radius. Similarly, define for the batch $\omega_N$, the function $ g_K(\omega_N) \coloneqq \max_{i=1,\dots,N} g_K(A^{(i)},B^{(i)})$. For a fixed  $\omega_N$, define the risk of violation  over $\Sigma_\data$ as:
\begin{equation}
R_K(\omega_N) = \mathbb{P}\left((A,B)\in \Sigma_{\cal D}, g_K(A,B)> g_K(\omega_N) \right).\end{equation}
\begin{lemma}[Robustness guarantee for a given $K$\ifdefined\arxiv\else$^{2}$\fi]\label{lemma:robustness_verification_k}
For a given accuracy $\varepsilon \in (0,1)$ and confidence $\delta \in (0,1)$, if $N \geq \ln(\frac{1}{\delta})/\ln(\frac{1}{1-\varepsilon})$, then with probability $1-\delta$ we have $R_K(\omega_N) \leq \varepsilon$, that is $\mathbb{P}^N(R_K(\omega_N) \leq \varepsilon) \ge 1- \delta$.
\end{lemma}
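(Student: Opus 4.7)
The plan is to exploit the fact that the risk $R_K(\omega_N)$ as defined only takes two possible values depending on whether or not the sample batch $\omega_N$ contains a destabilizing pair, and then apply an elementary tail bound on i.i.d.\ Bernoulli draws.

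First I would unpack the definition of $R_K(\omega_N)$ by conditioning on $g_K(\omega_N)$. If some sampled pair $(A^{(i)},B^{(i)})$ satisfies $\rho(A^{(i)}+B^{(i)}K)\ge 1$, then $g_K(\omega_N)=1$, and the event $\{g_K(A,B) > g_K(\omega_N)\}$ is empty, so $R_K(\omega_N)=0 \le \varepsilon$ trivially. Otherwise $g_K(\omega_N)=0$ and the risk equals the (non-random) quantity
\[
q \;\coloneqq\; \mathbb{P}\!\left(\{(A,B)\in \Sigma_\data : g_K(A,B)=1\}\right),
\]
which does not depend on $\omega_N$ at all. Hence the event $\{R_K(\omega_N)>\varepsilon\}$ is either empty (when $q\le\varepsilon$) or coincides with $\{g_K(\omega_N)=0\}$ (when $q>\varepsilon$).

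Next I would bound $\mathbb{P}^N(g_K(\omega_N)=0)$ in the non-trivial case $q>\varepsilon$. Writing $p_{\mathrm{bad}} \coloneqq \mathbb{P}(\{(A,B)\in \matrixzonotope{\data}: g_K(A,B)=1\})$, the $N$ draws are i.i.d., so
\[
\mathbb{P}^N\!\left(g_K(\omega_N)=0\right) = (1-p_{\mathrm{bad}})^N.
\]
The key inclusion $\{(A,B)\in \Sigma_\data : g_K(A,B)=1\} \subseteq \{(A,B)\in \matrixzonotope{\data}: g_K(A,B)=1\}$ implies $p_{\mathrm{bad}} \ge q > \varepsilon$, hence $(1-p_{\mathrm{bad}})^N \le (1-\varepsilon)^N$.

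Finally I would choose $N$ so that $(1-\varepsilon)^N \le \delta$, which is equivalent (after taking logs) to $N \ge \ln(1/\delta)/\ln(1/(1-\varepsilon))$, the hypothesis of the lemma. Combining the two cases yields $\mathbb{P}^N(R_K(\omega_N)\le \varepsilon) \ge 1-\delta$, as required. The argument is essentially routine; the only conceptual subtlety to make sure of is the reduction of the random quantity $R_K(\omega_N)$ to a two-valued function of $\omega_N$ via $g_K(\omega_N)$, so that a standard i.i.d.\ tail bound applies without invoking any learning-theoretic machinery.
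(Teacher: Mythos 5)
Your proof is correct, and it reaches the same $(1-\varepsilon)^N$ bound as the paper, but by a more elementary, self-contained route. The paper instead defines the auxiliary risk $\tilde R_K(\omega_N)=\mathbb{P}(g_K(A,B)>g_K(\omega_N))$ over all of $\matrixzonotope{\data}$, invokes a general worst-case performance-estimation result (Tempo et al., Thm.~3.1) to get $\mathbb{P}^N(\tilde R_K(\omega_N)>\varepsilon)\le(1-\varepsilon)^N$, and then uses the pointwise inequality $R_K(\omega_N)\le \tilde R_K(\omega_N)$, which follows from $\Sigma_\data\subseteq\matrixzonotope{\data}$. You bypass the cited theorem by exploiting the fact that $g_K$ is $\{0,1\}$-valued: your two-case reduction shows $R_K(\omega_N)$ is either $0$ or the deterministic constant $q$, so the only event to control is $\{g_K(\omega_N)=0\}$ when $q>\varepsilon$, and the bound $\mathbb{P}^N(g_K(\omega_N)=0)=(1-p_{\mathrm{bad}})^N\le(1-\varepsilon)^N$ is an immediate Bernoulli computation; the inclusion $\Sigma_\data\subseteq\matrixzonotope{\data}$ enters in the same place, via $q\le p_{\mathrm{bad}}$. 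What the paper's route buys is generality (the cited theorem handles arbitrary real-valued performance functions, where the random threshold $g_K(\omega_N)$ makes the risk genuinely random); what yours buys is transparency and independence from the external reference, which is arguably preferable here since only the binary case is needed. Both arguments are valid and yield the identical sample-size condition.
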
\ifdefined\arxiv
\begin{proof}
Let $\tilde R_K(\omega_N)= \mathbb{P}(g_K(A,B) > g_K(\omega_N))$.
Using \cite[Thm. 3.1]{tempo1996probabilistic}, we can straightforwardly obtain
$
    \mathbb{P}(\tilde R_K(\omega_N) > \varepsilon) \leq (1-\varepsilon)^N
$. Since $\Sigma_\data \subseteq \matrixzonotope{\data}$ it follows that for a given $\omega_N$ the inequality $\tilde R_K(\omega_N) \geq R_K(\omega_N)$ holds, thus
$ \mathbb{P}^N(R_K(\omega_N) > \varepsilon) \leq  \mathbb{P}^N(\tilde R_K(\omega_N) > \varepsilon) \leq (1-\varepsilon)^N.$ The proof follows by considering the complement and setting $\delta \geq (1-\varepsilon)^N $.
\end{proof}\else\fi
Hence, if $K$ is stabilizing for a given batch $\omega_N$, \emph{i.e.}, $g_K(\omega_N)=0$, then with confidence $1-\delta$, the probability that $K$ does not stabilize $(A,B)\in \Sigma_\data$ is lower than $\varepsilon$.  
In comparison to classical Chernoff bounds, which scale as $1/\varepsilon^2$, the bound in \cref{lemma:robustnessK} scales as $1/\varepsilon$ (since $\ln(1/(1-\varepsilon))\approx \varepsilon $ for small $\varepsilon$), greatly reducing the number of required samples. 
Note that the probability measure $\mathbb{P}$ can be chosen by the user, and may encode the a-priori information she has about $(A_0,B_0)$. \vspace{4pt}
\else\fi

\noindent \textbf{Computation of $K$ through random sampling.}
\ifdefined\arxiv\else \colored{Let $\rho(X)$ be the spectral radius of $X$}, and $\mathbb{P}^N$ be the $N$-fold product of a probability measure $\mathbb{P}$ over $\matrixzonotope{\data}$. \fi  Given a batch \colored{$\omega_N = \{(A^{(i)}, B^{(i)})\}_{i=1}^N$} of  $N$ i.i.d. samples drawn according to $\mathbb{P}^N$ over $\matrixzonotope{\data}$,  the user can learn $K$ by solving the following LMI
\begin{equation*}
        \begin{bmatrix}
        X & AX+BZ\\ (AX+BZ)^\top & X
        \end{bmatrix} \succ 0, \quad \forall (A,B)\in \omega_N
    \end{equation*}
in $X\succ0, Z$. Then, one finds  $K= ZX^{-1}$. Since $K$  is function of $\omega_N$, then $K$ is a random variable. The following result provides probabilistic guarantees for $K(\omega_N)$.
\begin{lemma}[Robustness guarantee for $K(\omega_N)$\ifdefined\arxiv\else$^{2}$\fi]\label{lemma:robustnessK}
For a given accuracy $\varepsilon \in (0,1)$ and confidence $\delta \in (0,1)$, let $N \geq \frac{5}{\varepsilon}(\ln\frac{4}{\delta} + d \ln\frac{40}{\varepsilon})$ with $d=2nm \log_2 (2e n^2(n+1))$. Consider an i.i.d. sample $\omega_N$ from $\matrixzonotope{\data}$ sampled according to $\mathbb{P}^N$. Assume that $K=K(\omega_N)$ is computed according to $\omega_N$, and that $\rho(A+BK(\omega_N))<1$ for every $(A,B)\in \omega_N$. Then, with probability at-least $1-\delta$ we have 
\begin{equation}
    \mathbb{P}\left((A,B)\in \Sigma_\data, \rho(A+BK(\omega_N))\geq 1\right) \leq \varepsilon.
\end{equation}
\end{lemma}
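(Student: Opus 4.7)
The plan is to recast the statement as a PAC-learning bound in the realizable regime and apply a classical VC-dimension based uniform convergence argument. Introduce the concept class $\mathcal{F}=\{f_K : K\in \mathbb{R}^{m\times n}\}$ defined on $\matrixzonotope{\data}$ by $f_K(A,B)=\mathbf{1}_{\rho(A+BK)\geq 1}$, and regard $\mathbb{P}$ as the underlying sampling distribution. By the hypothesis that $\rho(A+BK(\omega_N))<1$ on every element of $\omega_N$, the empirical error of $K(\omega_N)$ is exactly zero, i.e.\ $f_{K(\omega_N)}(A^{(i)},B^{(i)})=0$ for $i=1,\dots,N$. This realizability is what allows the tighter PAC sample complexity.

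The key tool is the standard realizable-case uniform convergence bound (Blumer et al., or Vidyasagar, \emph{Learning and Generalization}, Thm.~8.4.1): if $\mathcal{F}$ has VC dimension $d$, then whenever $N\geq \frac{5}{\varepsilon}\bigl(\ln\frac{4}{\delta}+d\ln\frac{40}{\varepsilon}\bigr)$, with probability at least $1-\delta$ over $\omega_N\sim \mathbb{P}^N$ every $f\in \mathcal{F}$ that has zero empirical error on $\omega_N$ satisfies $\mathbb{P}(f(A,B)=1)\leq \varepsilon$. Applied to $f_{K(\omega_N)}$, this yields $\mathbb{P}(\rho(A+BK(\omega_N))\geq 1)\leq \varepsilon$ with probability $1-\delta$. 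Since $\Sigma_\data\subseteq \matrixzonotope{\data}$, intersecting with $\Sigma_\data$ only decreases the probability, which delivers the stated inequality. All that is left is to bound $VC(\mathcal{F})$ by $d = 2nm\log_2(2en^2(n+1))$.

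For the VC bound, the plan is to observe that the predicate $\rho(A+BK)\geq 1$ is semi-algebraic in $K$. Encoding Schur stability through the Schur--Cohn--Jury test (or equivalently via a Hermite-type criterion) expresses $\{\rho(A+BK)<1\}$ as a Boolean combination of polynomial sign conditions, where each polynomial is of degree at most $n$ in the entries of $A+BK$, and hence of bounded degree in the $nm$ parameters of $K$ for fixed inputs $(A,B)$. Plugging this description into the polynomial VC-dimension bound of Goldberg--Jerrum (or Karpinski--Macintyre) for concept classes with $nm$ real parameters and a finite list of polynomial inequalities of controlled degree and count produces exactly the bound $d\leq 2nm\log_2(2en^2(n+1))$; this matches the sample complexity statement of the lemma.

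The main obstacle is the VC-dimension calculation itself: the precise constant $2en^2(n+1)$ has to come out of a careful accounting of how many polynomial sign conditions (and of what degree) are required to describe the Schur region and then feeding those counts into the Goldberg--Jerrum bound. Once this bookkeeping is done, the remaining steps---realizability of $K(\omega_N)$, application of the realizable-case PAC inequality, and the monotone restriction from $\matrixzonotope{\data}$ to $\Sigma_\data$---are essentially bookkeeping and incur no further constants.
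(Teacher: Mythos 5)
Your overall architecture is exactly the paper's: bound the VC dimension of the class of Schur-stability indicators parametrized by $K$, invoke a realizable-case (zero-empirical-error) PAC bound with sample complexity $N\geq \frac{5}{\varepsilon}(\ln\frac{4}{\delta}+d\ln\frac{40}{\varepsilon})$, and pass from $\matrixzonotope{\data}$ to $\Sigma_\data$ by monotonicity exactly as in the verification lemma. The paper cites \cite[Corollary 4]{alamo2009randomized} for the PAC step where you cite Blumer et al./Vidyasagar, but these are interchangeable here.

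The one place where your plan does not close is the step you yourself flag as the obstacle: the degree accounting behind $d=2nm\log_2(2en^2(n+1))$. You propose to encode Schur stability directly via the Schur--Cohn--Jury test and feed the resulting sign conditions into a Goldberg--Jerrum-type bound, asserting that this ``produces exactly'' the stated constant. It will not, in general: the Schur--Cohn/Jury conditions involve determinants (or iterated $2\times 2$ reductions) of the characteristic-polynomial coefficients, and the resulting polynomial degrees in the entries of $K$ differ from what the paper uses. The paper instead maps the unit disk to the left half-plane by a bilinear transform and applies the Routh--Hurwitz criterion, obtaining $s=n$ polynomial inequalities of degree at most $n(n+1)$ in the $l=nm$ entries of $K$; plugging $l=nm$, $s=n$, $2d=n(n+1)$ into \cite[Corollary 10.12]{vidyasagar2013learning} (following \cite[Thm.~3]{vidyasagar2001probabilistic}) gives precisely $2nm\log_2(4e\cdot\tfrac{n(n+1)}{2}\cdot n)=2nm\log_2(2en^2(n+1))$. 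So your route is sound in spirit but would yield a different (generically larger) constant inside the logarithm; to recover the lemma as stated you need the bilinear-transform/Routh--Hurwitz encoding, not the Jury test.
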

\ifdefined\arxiv
\begin{proof} 
The sample complexity (the value of $N$ ensuring the desired probabilistic guarantees) can be found by computing the $\textrm{VC}$-dimension \cite{vidyasagar2001probabilistic} of $\mathcal{C}=\{\mathcal{S}(K), K\in \mathbb{R}^{m\times n}\}$, where $\mathcal{S}(K) =\{ (A,B) \in \matrixzonotope{\data}: \rho(A+BK) <1\}$. The stability test of $A+BK$ can be formulated using the Routh-Hurwitz criterion in the $s$-domain through a bilinear transform. The Routh-Hurwitz test consists of $n$ polynomial inequality, each with maximum degree $n(n+1)$ in the elements of $K$. Following the argument in \cite[Thm. 3]{vidyasagar2001probabilistic}, we apply \cite[Corollary 10.12]{vidyasagar2013learning}  with $l=nm, d=n(n+1)/2, s=n$, which yields $\textrm{VC}(\mathcal{C}) \leq 2nm \log_2 (2e n^2(n+1))$. The result follows by applying standard statistical learning arguments, for example by applying \cite[Corollary 4]{alamo2009randomized} and using a similar argument as in \cref{lemma:robustness_verification_k}.
\end{proof}\else\fi
\noindent \colored{First, note that the lemma holds for any learning procedure, not only the one that we propose. Secondly, observe that the sample complexity in \cref{lemma:robustnessK} scales as $\tilde O(nm)$ (we have hidden the logarithmic terms) for fixed $(\varepsilon,\delta)$ (\emph{cf.} checking the vertices of $\matrixzonotope{\data}$, which is exponential in $n$ and $m$).}

\section{Numerical simulations}
\ifdefined \arxiv

\begin{table}[t]
\setlength\arrayrulewidth{1pt}  
\centering
 \resizebox{0.5\textwidth}{!}{
\begin{tabular}{lccccccccccc}
\toprule
\multicolumn{1}{c}{} & \multicolumn{5}{c}{\textbf{Time [min]}} & \multicolumn{5}{c}{\textbf{Memory [MB]}} \\
\cmidrule(rl){2-6} \cmidrule(rl){7-11}
\multicolumn{1}{r}{$N$}  & {$1$} & {$2$} & {$3$} & {$4$} & {$5$}& {$1$} & {$2$} & {$3$} & {$4$}& {$5$}\\
\midrule
{\bf\textsc{ZPC}} & $0.02$ & $1.02$ & $*$ & $*$ & $*$ & $32$ & $1221$ & $34602$ & $*$& $*$  \\
{\bf \textsc{TZDDPC}} & $0.09$ & $0.13$ & $0.26$ & $1.99$ & $27.65$& $24$ & $67$ & $198$ & $2128$ & $30081$   \\
{\bf \textsc{TZDDPC}} - $k_0=2$ & $0.09$ & $0.13$ & $0.25$ & $1.87$ & $12.58$ & $22$ & $56$ & $195$& $1949$& $12122$\\
{\bf \textsc{TZDDPC}} - $k_0=1$ & $0.09$ & $0.13$ & $0.25$ & $1.25$ & $2.2$& $23$ & $57$ & $195$ & $1173$& $2153$   \\
\midrule
\multicolumn{1}{c}{} & \multicolumn{5}{c}{\textbf{Time [min]}} & \multicolumn{5}{c}{\textbf{Memory [MB]}} \\
\cmidrule(rl){2-6} \cmidrule(rl){7-11}
\multicolumn{1}{r}{$N$}  & {$6$} & {$7$} & {$8$} & {$9$} & {$10$}& {$6$} & {$7$} & {$8$} & {$9$}& {$10$}\\
\midrule
{\bf\textsc{ZPC}} & $*$ & $*$ & $*$ & $*$ & $*$ & $*$ & $*$ & $*$ & $*$& $*$  \\
{\bf \textsc{TZDDPC}} & $*$ & $*$ & $*$ & $*$ & $*$ & $*$ & $*$ & $*$ & $*$& $*$  \\
{\bf \textsc{TZDDPC}} - $k_0=2$ &  $*$ & $*$ & $*$ & $*$ & $*$ & $*$ & $*$ & $*$ & $*$& $*$ \\
{\bf \textsc{TZDDPC}} - $k_0=1$ & $3.21$ & $4.16$ & $5.14$ & $6.15$ & $7.16$ & $3113$ & $4073$ & $5076$ & $6034$& $6993$   \\
\bottomrule
\end{tabular}
}
\caption{{\footnotesize Max. amount of time/memory used by  the solver to build problem (\ref{problem:robust_tz-ddpc}) (over 5 runs). The symbol $*$ indicates that the solver needed more than $40$ minutes and/or over $35$ GB of memory to build the problem.}}
\label{tab:computational_efficiency}
\end{table}

\else

\begin{table}[t]
\setlength\arrayrulewidth{1pt}  
\centering
 \resizebox{0.5\textwidth}{!}{
 {\color{blue}
\begin{tabular}{lccccccccccc}
\toprule
\multicolumn{1}{c}{} & \multicolumn{5}{c}{\textbf{Time [min]}} & \multicolumn{5}{c}{\textbf{Memory [MB]}} \\
\cmidrule(rl){2-6} \cmidrule(rl){7-11}
\multicolumn{1}{r}{$N$}  & {$1$} & {$2$} & {$3$} & {$4$} & {$5$}& {$1$} & {$2$} & {$3$} & {$4$}& {$5$}\\
\midrule
{\bf\textsc{ZPC}} & $0.02$ & $1.02$ & $*$ & $*$ & $*$ & $32$ & $1221$ & $34602$ & $*$& $*$  \\
{\bf \textsc{TZDDPC}} & $0.09$ & $0.13$ & $0.26$ & $1.99$ & $27.65$& $24$ & $67$ & $198$ & $2128$ & $30081$   \\
{\bf \textsc{TZDDPC}} - $k_0=2$ & $0.09$ & $0.13$ & $0.25$ & $1.87$ & $12.58$ & $22$ & $56$ & $195$& $1949$& $12122$\\
{\bf \textsc{TZDDPC}} - $k_0=1$ & $0.09$ & $0.13$ & $0.25$ & $1.25$ & $2.2$& $23$ & $57$ & $195$ & $1173$& $2153$   \\
\bottomrule
\end{tabular}
}}
\caption{\colored{{\footnotesize Max. amount of time/memory used by  the solver to build problem (\ref{problem:robust_tz-ddpc}) (over 5 runs). The symbol $*$ indicates that the solver needed more than $40$ minutes and/or over $35$ GB of memory to build the problem.}}}
\label{tab:computational_efficiency}
\end{table}

\fi
We illustrate our method on a  double integrator affected by strong adversarial noise. In addition to that, we only make use of a small sample of data to construct $\matrixzonotope{\data}$. The choice of the system, as well as the presence of strong noise, and the small sample size,  make sure that   $\matrixzonotope{\data}$ contains unstable systems, so that we can verify the effectiveness of the method.
To handle the mathematical operations with zonotopes, we created a python library  \textsc{PyZonotope}\footnote{\textsc{PyZonotope}: \href{https://github.com/rssalessio/pyzonotope}{github.com/rssalessio/pyzonotope}}. The code for \textsc{TZ-DDPC}\footnote{\textsc{TZ-DDPC}: \href{https://github.com/rssalessio/TZDDPC}{github.com/rssalessio/TZDDPC}} was written in Python, and can be found on GitHub. \ifdefined\arxiv\else Please refer to the technical report\footref{techreport} for further implementation details. \vspace{4pt}\fi
\ifdefined\arxiv To implement the problem in (\ref{problem:robust_tz-ddpc}), the zonotope inclusion constraints are approximated by considering their right and left interval limits as in \cite{alanwar2022robust}. To reduce the complexity of the problem, the order of all the matrix zonotopes is reduced to $1$ using the box reduction method presented in \cite{kopetzki2017methods}.\vspace{4pt} \else\fi

\noindent \textbf{Numerical results.} The sampled double integrator is defined by the equations:
\begin{equation}
    x_{t+1} = \begin{bmatrix}
    1 & 1 \\ 0 & 1
    \end{bmatrix}x_t + \begin{bmatrix}0.5 \\ 1\end{bmatrix} u_t + w_t,\quad x_0=\begin{bmatrix}-5\\-2\end{bmatrix}.
\end{equation}
We chose a strong adversarial noise $w_t$, uniformly sampled from the vertices of $\zonotope{w}=\left\langle 0, \begin{bmatrix}0.1 & 0.05 \\ 0.05 & 0.1\end{bmatrix}\right\rangle$. The state zonotope is $\zonotope{x} = \langle \begin{bmatrix}-4, 0 \end{bmatrix}^\top, \diag(4, 2) \rangle$, while the control signal zonotope is $\zonotope{u} = \langle 0, 1 \rangle$. The cost function at any step $k$ is defined by $\ell(x,u) = \|x\|_2^2 + 10^{-2} |u|$. The matrix zonotope $\matrixzonotope{\data}$ was built using $T=100$ samples, collected using a standard normal distribution for $u_t$. \colored{The matrix $K=\begin{bmatrix}-0.561 & -1.385\end{bmatrix}$ was computed by approximately solving the optimization problem $K\in \{K: \rho(A+BK)<1, (A,B)\in \argmax_{(A,B)\in \matrixzonotope{\data}} \|A+BK\|_2\}$ through the use of concave programming \cite{shen2016disciplined}. }
\ifdefined \arxiv Finally, the solution was verified using Lemma \ref{lemma:robustness_verification_k} with $\varepsilon=10^{-2},\delta=10^{-5}$. \else \fi

In Fig. \ref{fig:double_integrator}, we compare results for \textsc{TZ-DDPC}  and \textsc{ZPC} with $M=12$. \textsc{ZPC} is computationally complex, and hence, we simulated \textsc{ZPC} with $N=2$. For fair comparison, we used the same value of  $N$ for \textsc{TZ-DDPC}. With the same data, and constraints, \textsc{ZPC} could not solve the problem without enlarging the size of $\zonotope{x}$ by approximately $25\%$. This constraint violation is also seen in Fig. \ref{fig:double_integrator}. \vspace{4pt}

\colored{\noindent \textbf{Computational complexity.} Additionally, in \cref{tab:computational_efficiency} we simulated the problem for different values of $N$, and evaluated the amount of time and memory needed by the solver to solve the convex problem. In addition to comparing \textsc{ZPC} with \textsc{TZDDPC}, we also evaluated \textsc{TZDDPC} with the simplification in \cref{eq:simplification} for $k_0=1,2$. Results show how \textsc{ZPC} cannot cope with larger horizons, whereas \textsc{TZDDPC} with $k_0=1$ can be used to efficiently solve the problem.}
We conclude that using a stabilizing matrix $K$ can help improve stability, and reduce the complexity of using zonotope-based methods.
\begin{figure}[t]
    \centering
    \includegraphics[width=\linewidth]{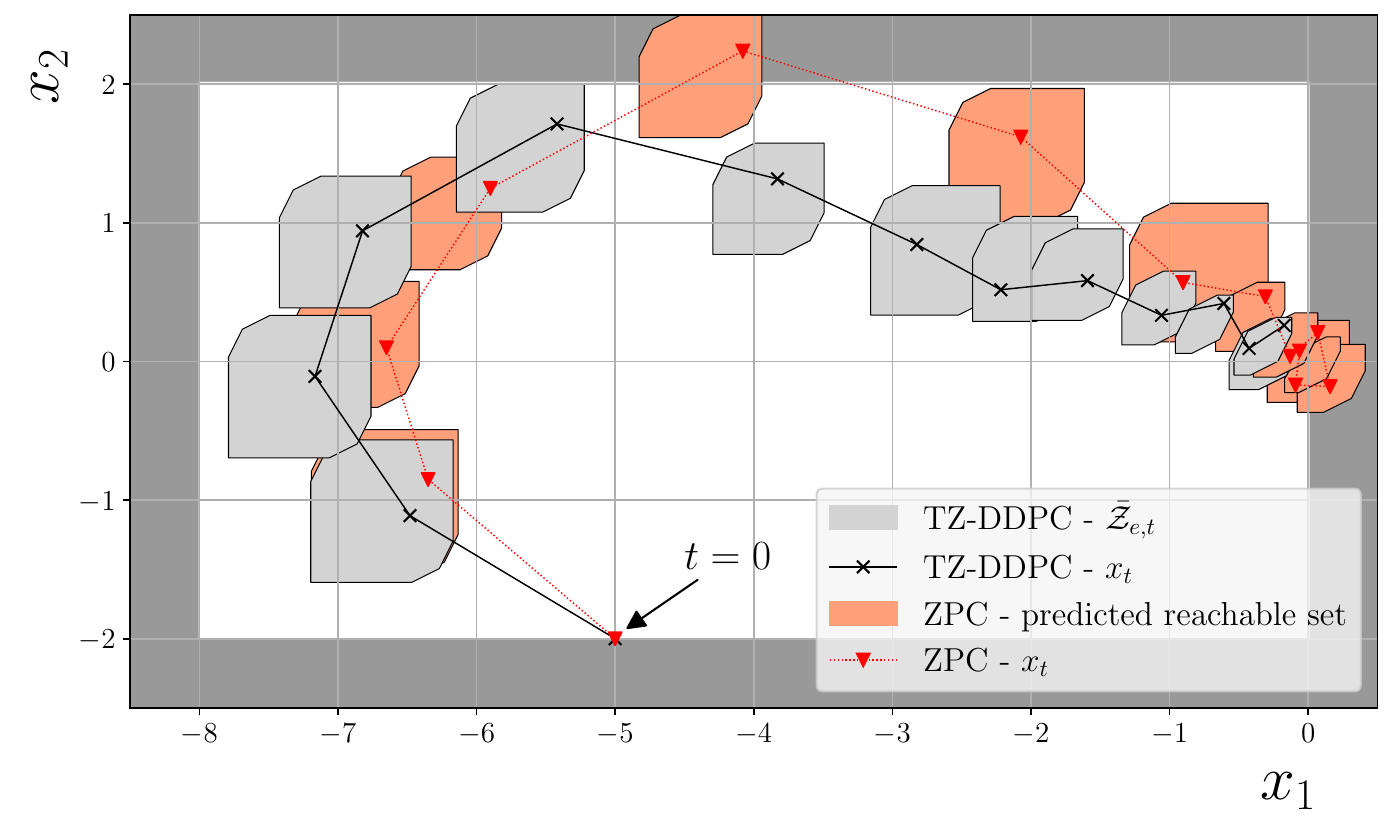}
    \caption{Double integrator: comparison of \textsc{TZ-DDPC} and \textsc{ZPC}. The gray area depicts the complement of $\zonotope{x}$.  As seen from the image, \textsc{ZPC} is not feasible for the original constraint zonotope $\zonotope{x}$, which was then enlarged by  $25\%$ to make the problem feasible for \textsc{ZPC}.}
    \label{fig:double_integrator}
\end{figure}

% \subsection{Five-dimensional system}
% Here we consider the five-dimensional system as in \cite[p.~39]{althoff2010reachability}, which is also the same example used by ZPC for their evaluation \cite{alanwar2022robust}. The system dynamics in continuous time are given by
% \begin{equation}
%     \dot{x}(t) = \begin{bmatrix}
%     -1 & -4 & 0  & 0  & 0\\
%     4  & -1 & 0  & 0  & 0\\
%     0  & 0  & -3 & 1  & 0\\
%     0  & 0  & -1 & -3 & 0\\
%     0  & 0  & 0  & 0  & -2
%     \end{bmatrix}x(t)+ u(t),
% \end{equation}
% which are discretized with sampling time $0.05$ sec. We consider a bounded noise $w_t$ uniformly distributed in $\zonotope{w}= \langle 0, 10^{-1} I_5\rangle$. The control signal zonotope is $\zonotope{u}=\langle 7,19\rangle$, and the state zonotope is $\zonotope{x}=$
\section{Conclusion}
In this work we proposed a tube-based MPC formulation based on zonotopes to deal with generic convex loss functions, bounded process noise and uncertainties in the system matrices. Our method builds on \cite{alanwar2022robust}, and consists of two phases: (i) an offline data-collection phase that builds a set of possible system matrices that are consistent with the data; (ii) an online control phase that uses a tube-based MPC paradigm to robustly control the unknown linear system. We show how to guarantee stability of the resulting error zonotope, and provide probabilistic robustness guarantees for the stabilizing gain matrix $K$. \colored{Future venues of research include: find less conservative approximations of the error zonotope and/or use persistently exciting control signals to shrink the set of uncertain system matrices.}
\ifdefined\arxiv\else
\addtolength{\textheight}{-12cm}   % This command serves to balance the column lengths
                                  % on the last page of the document manually. It shortens
                                  % the textheight of the last page by a suitable amount.
                                  % This command does not take effect until the next page
                                  % so it should come on the page before the last. Make
                                  % sure that you do not shorten the textheight too much.
\fi
\bibliographystyle{plain} % We choose the "plain" reference style
\bibliography{refs} % Entries are in the refs.bib file

\end{document}